\newcommand{\vu}{\boldsymbol{u}}
\newcommand{\vx}{\boldsymbol{x}}
\newcommand{\mxa}{\boldsymbol{A}}
\newcommand{\mxb}{\boldsymbol{B}}
\newcommand{\mxu}{\boldsymbol{U}}
\newcommand{\mxw}{\boldsymbol{W}}
\newcommand{\R}{\mathbb{R}}
\newcommand{\fig}[1]{Fig.~\ref{fig:#1}}
\newcommand{\mat}[1]{{\begin{pmatrix} #1 \end{pmatrix}}}
\newcommand{\eqna}[1]{\begin{equation}\begin{aligned}#1 \end{aligned}\end{equation}}
	\newtheorem{lemm}{Lemma}
\title{\LARGE \bf
Measuring LTI System Resilience against Adversarial Disturbances\\
based on Efficient Generalized Eigenvalue Computations
}
\author{Johannes Börner$^{1}$ and Florian Steinke$^{2}$
\thanks{*This work was sponsored by the German Federal Ministry of Education and
	Research in project AlgoRes, grant no. 01jS18066A. It has been performed
	in the context of the LOEWE center emergenCITY}
\thanks{$^{1}$Johannes Börner is with Faculty of Electrical Engineering, 
        Technical University Darmstadt, 64283 Damrstadt, Germany
        {\tt\small johannes.boerner@eins.tu-darmstadt.de}}%
\thanks{$^{2}$Florian Steinke is with Faculty of Electrical Engineering, 
	Technical University Darmstadt, 64283 Damrstadt, Germany
        {\tt\small  florian.steinke@eins.tu-darmstadt.de}}%
}
\tikzset{
	pics/carc/.style args={#1:#2:#3}{
		code={
			\draw[pic actions] (#1:#3) arc(#1:#2:#3);
		}
	}
}
\newcommand{\copyrightstatement}{
		\noindent
		\footnotesize
		\copyright  2021 IEEE Personal use of this material is permitted. Permission from IEEE must be obtained for all other uses, in any current or future media, including reprinting/republishing this material for advertising or promotional purposes, creating new collective works, for resale or redistribution to servers or lists, or reuse of any copyrighted component of this work in other works.
}    
\newcommand\copyrightnotice{%
	\begin{tikzpicture}[remember picture,overlay]
		\node[anchor=south,yshift=10pt] at (current page.south) {\fbox{\parbox{\dimexpr\textwidth-\fboxsep-\fboxrule\relax}{\copyrightstatement}}};
	\end{tikzpicture}%
}
\tikzset{
	pics/carc/.style args={#1:#2:#3}{
		code={
			\draw[pic actions] (#1:#3) arc(#1:#2:#3);
		}
	}
}
\begin{document}


\maketitle

\begin{abstract}
Resilient systems are able to recover quickly and easily from disturbed system states 
that might result from hazardous events or malicious attacks.
In this paper a novel resilience metric for linear time invariant systems is proposed: 
the minimum control energy required to disturb the system
is set into relation to the minimum control energy needed to recover.
This definition extends known disturbance rejection metrics considering random effects to account for adversarial disturbances.
The worst-case disturbance and the related resilience index can be computed efficiently via solving a generalized eigenvalue problem
that depends on the controllability Gramians of the control and disturbance inputs.
The novel metric allows improving system resilience by optimizing the restorative control structure or by hardening the system against specific attack options.
The new approach is demonstrated for a coupled mechanical system.
\end{abstract}

\section{Introduction}
\copyrightnotice

The concept of a resilience triangle \cite{Bruneau2003} has been influential in the study of the resilience of (complex networked) systems.
The triangle is the area between the actual degraded system performance after a disturbance and the desired quality of service integrated over time.
The area depends both on the robustness of the system, i.e., that the system performances does not degrade too much following a certain disturbance caused by failure or attack, 
as well as the time for restoring the system afterwards.
For power systems, the concept has been widely used and adapted \cite{Roege2014,2016Hosseini,Bhusal2020}, 
measuring the quality of service via the load not served \cite{Panteli2015}, the number of failed network components \cite{Panteli2017a}, or grid frequency deviations \cite{2017Bie,Boerner2019a}.

Such resilience measures are dependent on the examined contingencies.
These may, however, not be known exactly in advance
and it is thus worthwhile to analyze more general, structural system properties.
Structural controllability theory \cite{lin1974structural,Liu2016} works based on the 
the connectivity graph between the dynamic states and the system inputs 
as implied by the sparsity structure of the system matrices.
It allows making statements about the structural controllability of the system and to efficiently minimize the number of nodes needed to control the system \cite{Liu2011,Pequito2016}.
Using the control path length as an indicator for the required control energy, the choice of control nodes can further be optimized \cite{li2018enabling}.
When taking the actual values of the system matrices and the implied system dynamics into account,
the degree of controllability can be measured based on the spectrum of the controllability Gramian of the system \cite{Muller1972}.
Suitable control nodes can be selected by optimizing different norms of this Gramian via efficient sub-modular techniques \cite{Summers2016}.
The approach also allows to characterize graph structures in synchronization problem that can be distorted with minimum energy \cite{Dhal2016}.
The optimal selection of control nodes for static linear systems with constraints is considered in \cite{mora2021minimal}.
While these approaches measure the general degree of controllability, they do not consider the interplay between a specific type of disturbance and the counteracting control system.

This interaction is be considered when computing the degree of disturbance rejection \cite{Kang2009}, 
i.e., the average energy to counteract colored random noise via the trace of a matrix involving the controllability Gramian of the system and the covariance matrix of the noise source.
The work shares some similarity to our approach but does not consider worst-case or adversarial behavior of the disturbance source.
This, however, is of major interest in many networked systems which are vulnerable to cyber attacks \cite{case2016analysis, Dhal2016, chen2020study}.

In this paper we propose to model a specific type of distortions, the \emph{attacker}, and the type of available control system, the \emph{defender}, via two input matrices for a linear time invariant dynamical system.
We can then analyze the ratio of the control energies needed for various attack and defense trajectories.
If both sides behave optimally, assuming full model knowledge on both sides, we can compute the critical energy ratio via a generalized eigenvalue problem involving the controllability Gramians of the attacker and the defender.
We propose to use the resulting value as a novel resilience index.
For a mechanical test system we show that the index yields plausible resilience ratings, both when the assumptions used to derive the index are exactly fulfilled as well as when these assumptions are partially violated.

The proposed novel metric takes the type of actions available to the attacker and the defender into account, 
including the possible effects resulting from the system dynamics and the interplay between the two.
It does not focus on individual contingencies.
Moreover, in terms of the resilience triangle the metric includes both elements of robustness, the required energy for distorting the system, as well as the time and ease for restoring the system, the defense energy.
The examined adversarial setting is especially valuable when considering malicious distortions such as cyber attacks. 
The index value is also efficient to compute.

The remainder of the paper is structured as follows: 
In Section~\ref{sec:prel} some necessary results from minimum energy control are reviewed and our notation is introduced.
The new metric is derived step by step in Section~\ref{sec:met}.
It is demonstrated for a system of three coupled pendula in Section~\ref{sec:ver}.
We conclude in Section~\ref{sec:con}.

\section{Preliminaries}\label{sec:prel}

This section defines our notation and reviews required results from minimum energy control theory \cite{Antsaklis1997}.
Consider the linear time invariant (LTI) system 
\eqna{\label{eq:prob}
	\dot{\vx}=\mxa\vx+\mxb\vu
}
with state $\vx(t) \in \R^n$, control input $\vu(t) \in \R^m$, and system matrices $\mxa,\mxb$ of appropriate sizes.
Let $(\mxa,\mxb)$ be controllable and let $\vu(t;\vx_0,\vx_1,t_0,t_1)$ denote a control input that moves the system state from $\vx_0$ at time $t_0$ to $\vx_1$ at time $t_1$.
The control energy of any control $\vu(t)$ during time span $[t_0,t_1]$ is defined as
\eqna{
E[\vu(t),t_0,t_1] = \int\limits_{t_0}^{t_1} \vu^T(\tau)\vu(\tau)d\tau.
}
Of all controls $\vu(t;\vx_0,\vx_1,t_0,t_1)$ let $\vu^*(t;\vx_0,\vx_1,t_0,t_1)$ denote the one that minimizes this control energy.
Using the symmetric positive definite controllability Gramian
\eqna{
	\mxw(t_1,t_0)=\int\limits_{t_0}^{t_1}e^{\mxa (t_1-\tau)} \mxb\mxb^T e^{\mxa^T (t_1-\tau)}d\tau,
}
this minimum energy control $\vu^*(t;\vx_0,\vx_1,t_0,t_1)$ can be computed in closed form as 
\eqna{\label{eq:opt_u}
\mxb^Te^{\mxa^T (t_1-t)}\mxw^{-1}(t_1,t_0)\Delta\vx_{(\vx_0,\vx_1,t_0,t_1)},
}
where $\Delta\vx_{(\vx_0,\vx_1,t_0,t_1)} = \vx_1-e^{\mxa (t_1-t_0)}\vx_0$.
The achieved minimum energy $E[\vu^*(t;\vx_0,\vx_1,t_0,t_1),t_0,t_1]$ then is
\eqna{\label{eq:opt_E}
\Delta\vx^T_{(\vx_0,\vx_1,t_0,t_1)}\mxw^{-1}(t_0,t_1)\Delta\vx_{(\vx_0,\vx_1,t_0,t_1)}.
}

For $t_1 = t_0$ the controllability Gramian is $\mxw(t_0,t_0) = 0$ and its time derivative is 
\eqna{\label{eq:Wc_deriv}
\frac{d}{dt_1}\mxw(t_1,t_0) = \mxa\mxw(t_1,t_0)+\mxw(t_1,t_0)\mxa^T + \mxb\mxb^T.
}
This allows computing the Gramian for all time intervals $[t_0,t_1]$ via a linear initial value problem.
For stable system matrices $\mxa$, the Gramian will converge and $\dot{\mxw}(\infty) = 0$ implies the well-known Lyapunov equation $\mxa\mxw(\infty)+\mxw(\infty)\mxa^T = -\mxb\mxb^T$.

If $(\mxa,\mxb)$ is not controllable, there exists a subspace $V\subseteq\R^n$ of the state space that cannot be reached by the controller.
It holds that $\vx^T\mxw(t_1,t_0)\vx=0$ for $\vx\in V$, i.e. the Gramian is only positive semi-definite and not invertible in this case.
The symmetric positive semi-definite Gramian can be decomposed as $\mxw(t_1,t_0)=\mxu diag(\lambda_1,..,\lambda_k,0,\ldots,0)\mxu^T$, $\lambda_1,\ldots,\lambda_k > 0$ and $\mxu$ orthogonal.
With slight abuse of typical notation we will in the following use $\mxw(t_1,t_0)^{-1}$ to denote $\mxw(t_1,t_0)^{-1}=\mxu diag(1/\lambda_1,..,1/\lambda_k,M,\ldots,M)\mxu^T$ for the non-invertible case, where $M$ is a very large number.
Note that this definition is different to the typical pseudo-inverse, but implies that the minimum energy to reach a state in $V$ is very large according to \eqref{eq:opt_E}. 
For reachable states orthogonal to $V$ the energy expression \eqref{eq:opt_E} still yields valid results.

\section{Proposed Resilience Metric}\label{sec:met}

To measure a system's resilience we consider a LTI system with two inputs
\eqna{\label{eq:sys2}
	\dot{\vx}=\mxa\vx+\mxb_a\vu_a + \mxb_d\vu_d.
}
The attacker (malicious agent, technical failure, or other unknown external influence) has control over the attack input $u_a(t)$ and the defender uses inputs $\vu_d(t)$.
Whereas the defender should have full control of the system, i.e., $(\mxa,\mxb_d)$ is controllable, the attacker might only be able to influence a few points in the system, i.e., $(\mxa,\mxb_a)$ will often not be controllable and the extended formalism introduced above will apply.

Now, imagine that the attacker first moves the system from its original state $\vx_0$ to state $\vx_1$ during time span $[t_0,t_1]$ and the defender then steers the system back to state $\vx_0$ until time $t_2$.
We propose to measure the resilience of the system as the ratio of the control energies associated to the two input trajectories,
\eqna{\label{eq:ratio}
  \frac{E[\vu_a(t;\vx_0,\vx_1,t_0,t_1),t_0,t_1]}{E[\vu_d(t;\vx_1,\vx_0,t_1,t_2),t_1,t_2]}.
}
If the attacker can create a system state deviation with little energy compared to the defender's efforts the system is not resilient.
If the defender only needs little energy to undo the attacker's actions the system is resilient.

\bigskip
Assuming an adversarial attacker, it will try to minimize the system's resilience and thus use a minimum energy attack trajectory. 
On the contrary, the defender will try to maximize the resilience and thus also use a minimum energy defensive strategy.
The critical value of the resilience measure then is
\eqna{\label{eq:ratio2}
  \frac{E[\vu^*_a(t;\vx_0,\vx_1,t_0,t_1),t_0,t_1]}{E[\vu^*_d(t;\vx_1,\vx_0,t_1,t_2),t_1,t_2]}\\
	= \frac{\Delta\vx^T_{(\vx_0,\vx_1,t_0,t_1)}\mxw_a^{-1}(t_0,t_1)\Delta\vx_{(\vx_0,\vx_1,t_0,t_1)}}
	{\Delta\vx^T_{(\vx_1,\vx_0,t_1,t_0)}\mxw_d^{-1}(t_1,t_2)\Delta\vx_{(\vx_1,\vx_0,t_1,t_2)}}
}
where $\mxw_a$ / $\mxw_d$ denote the controllability Gramian related to the attack / defensive input matrices $\mxb_a$ / $\mxb_d$.

\bigskip
Let's now assume that $\mxa$ is stable, either since the underlying system is naturally stable or since it is stabilized by a controller and we only consider closed loop system dynamics.
Moreover, let $\vx_0=0$ be the equilibrium point. The resilience measure then simplifies to 
\eqna{\label{eq:ratio21}
\frac{\vx_1\mxw_a^{-1}(t_0,t_1)\vx_1}
	{\vx_1 \underbrace{e^{A^T(t_2-t_1)} \mxw_d^{-1}(t_1,t_2) e^{A(t_2-t_1)}}_{=\tilde{\mxw}_d^{-1}(t_1,t_2) } \vx_1}.
}
An adversarial attacker will choose the attack vector $\vx_1$ that minimizes this ratio.
This yields our final resilience index $\rho(t_0,t_1,t_2)$,
\eqna{\label{eq:ratio3}
\rho(t_0,t_1,t_2) = \min_{\vx_1} \frac{\vx_1\mxw_a^{-1}(t_0,t_1)\vx_1}
	{\vx_1 \tilde{\mxw}_d^{-1}(t_1,t_2) \vx_1},
}
which can be computed as the minimum extended eigenvalue of $\mxw_a^{-1}(t_0,t_1)$ and $\tilde{\mxw}_d^{-1}(t_1,t_2)$.

\bigskip
This resilience index can also be computed without inverting the Gramian matrices.
This follows from a lemma presented and proved in the appendix.
It is 
\eqna{\label{eq:ratio4}
\rho(t_0,t_1,t_2) 
&= \min_{\vx_1} \frac{\vx_1 \tilde{\mxw}_d(t_1,t_2) \vx_1}	{\vx_1\mxw_a(t_0,t_1)\vx_1} \\
&= \left(\max_{\vx_1} \frac{\vx_1 \mxw_a(t_0,t_1) \vx_1}{\vx_1 \tilde{\mxw}_d(t_1,t_2)\vx_1}\right)^{-1},
}
i.e., we can compute the resilience index via the maximum extended eigenvalue of $\mxw_a(t_0,t_1)$ and $\tilde{\mxw}_d(t_1,t_2) = e^{A(t_1-t_2)} \mxw_d(t_1,t_2) e^{A^T(t_1-t_2)}$.

Avoiding the inversion is beneficial for two reasons.
First, it reduces the computational costs for computing the inverses.
Second, we avoid actually working with the extended inverse of the Gramian defined above if $(\mxa,\mxb_a)$ is not controllable. 
Instead, the attack Gramian can be used directly, whether invertible or not. 

\bigskip
The proposed resilience metric is not dependent on a specific attack trajectory, but only on the type of a possible distortion, the type of the defensive control system and the system dynamics, 
as characterized through the system matrices $\mxb_a$, $\mxb_d$, and $\mxa$, respectively.
The metric does assume, however, that both parties act optimally.
If this assumption is violated, no strict statements can be derived and the energy ratio may take different values.
For behavior which is outside, but close to the assumptions, we show in our experimental section that the resilience index can still give reasonable ratings about the observed energy ratios.

The resilience index also depends on the time spans $t_1-t_0$ and $t_2-t_1$.
If the time span $[t_1,t_2]$ is significantly longer than the characteristic time scale of the stable system matrix $\mxa$, the system will approach its equilibrium point $\vx_0=0$ without the need for the defender to perform any action at all, independently of the distorted state $\vx_1$.
The resilience index will correspondingly tend towards infinity since $\tilde{\mxw}_d(t_1,t_2) \to \infty$.
If the time spans $t_1-t_0$ and $t_2-t_1$ approach zero, only the input matrices $\mxb_a$ and $\mxb_d$ are relevant for the value of the resilience index.
If $\mxb_d$ has full row rank, a finite limit value is achieved.
Otherwise, the index converges to zero.
The resilience index thus is most interesting for time spans on the order of the characteristic time scale of the system.
In this case we obtain finite, interpretable values, see e.g. the next section, that take into account both the connectivity of the distortions to the system as well as the system dynamics.

The resilience metric allows to compare different defensive settings $\mxb_d$ and choose the most resilient one. 
Different $\mxb_d$ can, for example, imply different control and communication setups in a distributed system.
One can also optimize for the most resilient defensive structure $\mxb_d$, i.e.,
\eqna{\label{eq:ratio20}
\max_{\mxb_d} \rho(t_0,t_1,t_2;\mxb_d).
}
This should, however, be done subject to suitable normalization conditions for $\mxb_d$, such that not the scaling of $\mxb_d$, but only its structure influences the result.

Alternatively, one can identify which attack setting $\mxb_a$ cannot be well-defended against a given fixed defensive control setup.
A counter measure would be to harden the access to the states that $\mxb_a$ addresses, i.e. changing the $\mxb_a$ that is available to the attacker.

\section{Experiments}\label{sec:ver}

We demonstrate our new resilience index for a mechanical system  consisting of three coupled pendula.
We study two settings. 
The first one is exactly in line with the assumptions of Section~\ref{sec:met}.
The second one tests the qualitative validity of the resilience ratings if the assumptions are partially violated.
I.e. we consider the important cases where the defensive controller does not perform a dedicated minimum energy defense just after a fully observed attack, but when it just executes a typical stabilizing controller throughout and makes no specific efforts for attack detection.

\subsection{Simulation Setup}

\begin{figure}
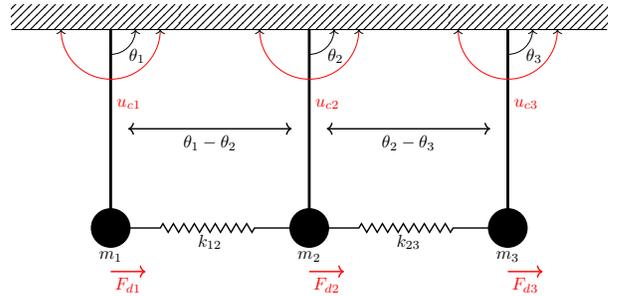

	\center
	\includestandalone[width=.45\textwidth]{gfx/tikz_pendulum}
	\caption{Test system with three  coupled pendula. Attack forces are applied to the masses of the pendula whereas the defensive control system determines the torques at the pendula's bases.}\label{fig:pendula}
\end{figure}

The test system is depicted in \fig{pendula}.
In the linear regime around the steady state, the angular displacements $\theta_i$ of pendula $i=1,2,3$ are governed by
\eqna{
m l \ddot{\theta}_{i}=
	-m g \theta_i -\sum_{j\in N_i} k l (\theta_i-\theta_j)- d_i \dot{\theta}_i  + \frac{1}{l}\tau_{i} + F_i.
}
Here, $m = 1 kg$ is the mass of the pendula, $l = 1m$ their length, $k = 10 N/m$ the spring constant and $g = 10 m/s^2$ the gravitational constant.
$N_i$ denotes the set of neighbors of pendulum $i$ and $d_i$ is a damping factor.
It is chosen as $0.1\times 1/s$ for the left and middle pendulum and $0.3\times 1/s$ for the right pendulum, to break the symmetry between the left and right pendulum.
$\tau_i$ is angular momentum that the defensive controller commands and $F_i$ the attacker's distorting forces.

Choosing as state vector $\vx=\mat{\theta_1\cdots\theta_3,\dot{\theta}_1\cdots\dot{\theta}_3}$, $\vu_c=\mat{\tau_1\cdots\tau_3}$ as the control input, and $\vu_d=\mat{F_1\cdots F_3}$ as the disturbance input 
allows to bring the system into standard form \eqref{eq:sys2}.
The smallest eigenvalue of the system matrix $\mxa$ corresponds to a characteristic system time of $T_{sys}\approx15s$.

\subsection{Verification of Theoretical Results}\label{ssec:pen2}

We first simulate the attack and defense process as described in Section~\ref{sec:met}, see \fig{exp1_angles}.
Using \eqref{eq:ratio4} we compute an attacker-optimal displacement vector $\vx_1$ for $t_1-t_0=t_2-t_1=T_{sys}$.
The attacker then performs a minimal energy attack to reach that state in time $[0,T_{sys}]$ during which the defender performs no action.
During time span $(T_{sys},2T_{sys}]$ the attacker does not perform any action but the defender uses minimum energy control to restore the steady state in the given time frame.
\fig{exp1_angles} shows the setting where both attacker and defender have access to all three pendula.

\begin{figure}
	\center
	\includegraphics[width=0.49\textwidth]{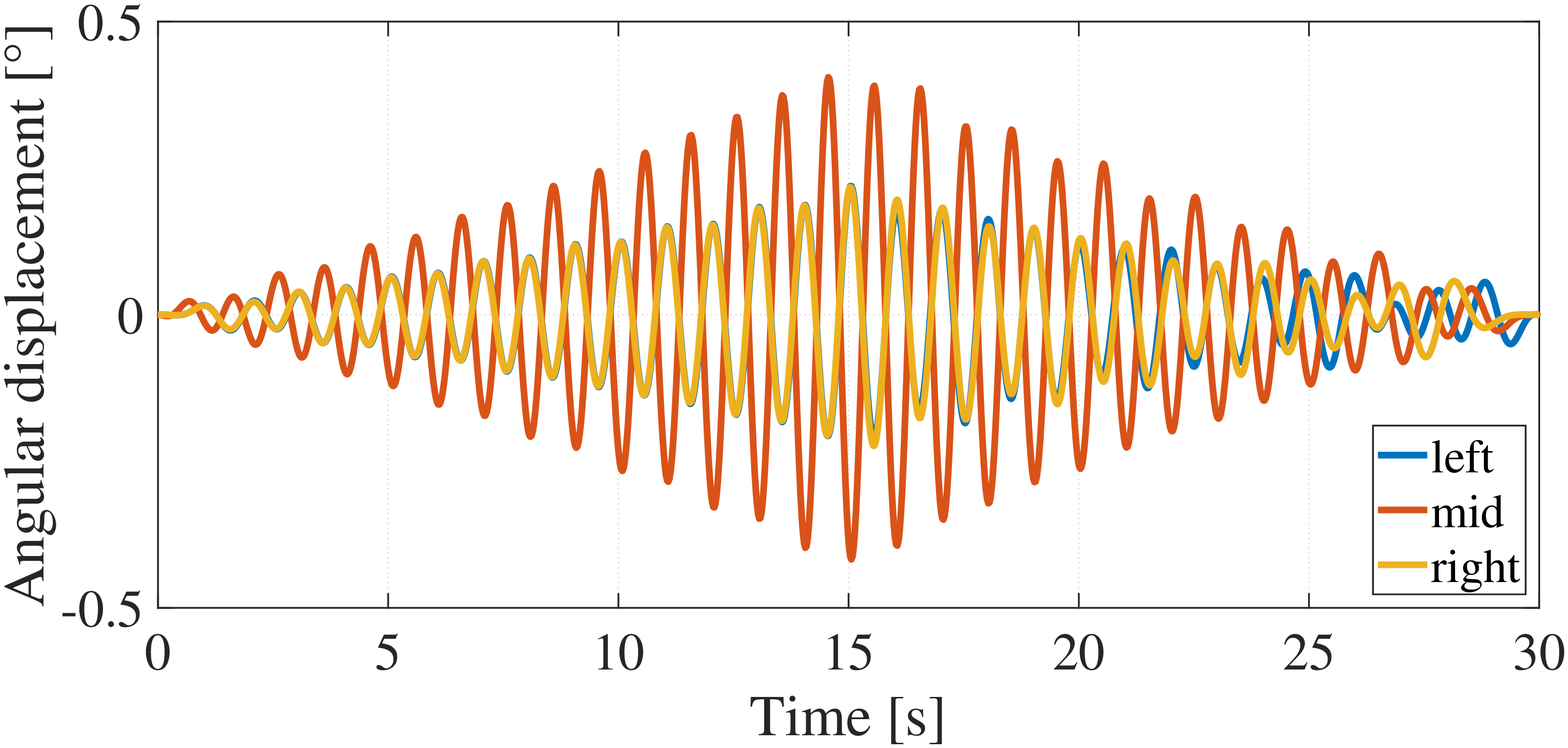}\\
	\includegraphics[width=0.49\textwidth]{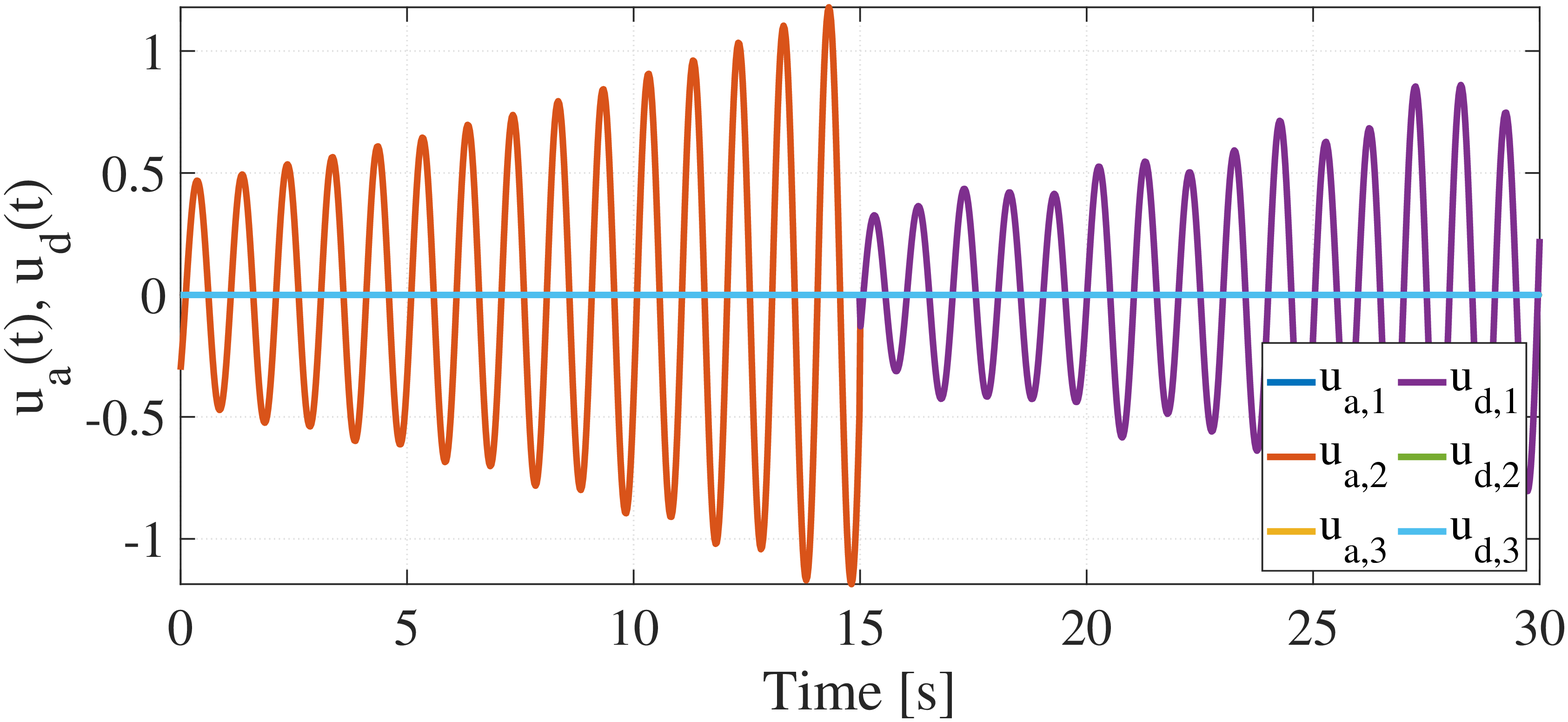}
	\caption{Angular displacements (top) and inputs (bottom) for the three pendula for an optimally chosen $15s$ minimum energy attack and a consecutive $15s$ minimum energy restoration phase.
		Attacker and defender have access to all three pendula here.}\label{fig:exp1_angles}
\end{figure}

\fig{exp1_angles} demonstrates that the maximum amplitudes are attained at $t=T_{sys}$ and that the (open loop) minimum energy control for the restoration phase actually brings the system state back to $\vx_0=0$ within the planned time frame.
The control energies were numerically computed for the input trajectories shown in \fig{exp1_angles}.
Their ratio matched the theoretical results predicted from \eqref{eq:ratio4} within numerical precision.
It is observable that the defensive control has only about half of the amplitudes for the inputs compared to the attacker. 
The resulting control energy for the defender is thus smaller than that of the attacker which is reflected in a computed optimal energy ratio $\rho\approx 7.32$.

\begin{table}
	\caption{Calculated resilience indices for the different attack and defender configurations}
	\label{tab:ex1a}
	\begin{center}
		\begin{tabularx}{0.49\textwidth}{|c|*4{>{\centering\arraybackslash}X|}}\hline
			
			\diagbox{Attacker}{Defender}& Left&Middle& Right&All\\\hline
	Left &6.79   & 0.04   & 6.85&   31.32\\\hline
Middle &1.80    &6.79   & 1.79 &  10.95\\\hline
Right &6.90    &0.04  &  6.79  & 31.63\\\hline
All &1.19    &0.02 &   1.19   & 7.32
\\\hline
		\end{tabularx}
	\end{center}
\end{table}

Table~\ref{tab:ex1a} shows the calculated resilience indices for all possible attacker/defender configurations.
Left / middle / right / all means that the attacker / defender has only access to the forces / torques for the left / middle / right / all pendulum.
This can be realized via choosing appropriate input matrices $\mxb_a$ and $\mxb_d$ consisting of zeros and ones only.

The values of the resilience indices for the ``left/left'',  ``right/right''  ``middle/middle'' configurations are close to maximal within the set of configurations with one attack and defense input only.  
Matching the attack vector is known to be an effective defender position \cite{Kang2009}.

Interestingly, the resilience indices of the configurations ``left/right'' and ``right/left'' are also numerically close to the matched ones. 
This can be explained by the approximate symmetry in the system, i.e., the left and right pendulum move very similarly if the middle pendulum is actuated. 
This insight does not directly follow from the attacker / defender placement but can be discovered with the new resilience index.
As intuitively to be expected, the resilience values are generally larger when the defender has access to all locations.

Overall, the resilience indices of the different situations allow to select plausible defense positions for  given attacks,
or the most damaging attack given a fixed defense structure. 
The latter result may be used to harden the access to the dynamic states.

In \fig{rho_log} we examine the dependence of the resilience index on the time spans $\Delta t$ for the attack / defense situation.
To this end, we show the results for $\rho(0,\Delta t, 2\Delta t)$ for all times $\Delta t$ from one order of magnitude smaller than $T_{sys}$ to one larger. 

\begin{figure}
	\includegraphics[width=0.49\textwidth]{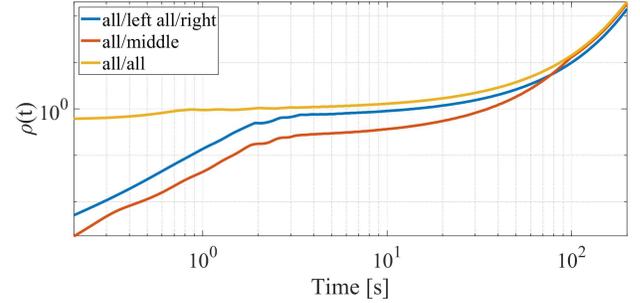}
	\caption{Resilience indices $\rho(0,\Delta t, 2\Delta t)$ for different times $\Delta t$ and different defender configurations. The attacker is able to actuate all pendula here. }\label{fig:rho_log}
\end{figure}

If $\Delta t$ is around the order of magnitude of the characteristic system time $T_{sys} = 15s$, plausible finite values can be observed for the resilience index.
The relative ordering for the different defense configurations is preserved in that range.

As expected indices converge to infinity for large $\Delta t$. 
The relative differences are becoming smaller for large $\Delta t$. This is because the defensive energy for large $\Delta t$ converges to zero due to the stable system dynamics. 
Whatever distorted state one starts from, the pendula will always return to rest without the need for much defender action.

For very small times $\Delta t$ the resilience indices are converging to zeros, if not all pendula are actuated by the defender.
This is because the defender needs to exploit the system dynamics to correct a distortion on a pendulum where it has no direct access.
Without any time for the system to evolve, this is not possible and the defender energy diverges to infinity.

\subsection{Extrapolation Tests}

We now test the predictive power of the resilience index
when the assumptions of Section~\ref{sec:met} are not exactly fulfilled.

Suppose that the defender does not follow an active monitoring and dedicated defense strategy but just applies a typical stabilizing feedback controller to counteract all types of disturbances whether malicious or not.
This is a typical situation in many real systems.
For the attacker, on the contrary, we assume that he or she is well-informed about the system and aims to achieve maximal damage at lowest possible effort as measured by the involved control energies.
Thus the minimum energy attack is still plausible here.

To demonstrate this scenario we implemented a continuously operating LQ feedback controller for the left pendulum as the defensive strategy changing the characteristic system time to $T_{sys}\approx 4.73~s$.
The attacker is allowed to access all nodes for its minimum energy attack.
This time, however, the attack is designed with information about the system's closed loop dynamics.
The attack duration to reach the state $\vx_1$ is $15~s$ and the system is monitored until $t=30~s$.

\fig{exp2_angles} shows the time evolution of the angles and inputs for the case ``all/left''.
Since the attack is designed with knowledge of the defender's control law and in context of the closed loop system, the attacker achieves its goal state $\vx_1$.
The maximum angular displacement can thus be observed at $t=15~s$.
The defender can be observed to be active during the whole time period.
The system is transferred back close to the steady state at time $t=30s$.

In Tab.~\ref{tab:ex2} the measured energy ratios for the different configurations are shown as well as the computed resilience indices.

Several observations can be made.
First, the values of the resilience index are much larger than the measured energy ratios.
This is because the resilience index assumes an optimal defense, but the LQ controller is not energy minimal and thus leads to a lower energy ratio.
Second, the relative size of the resilience index and the measured energy ratio for the different attack scenarios are similar, though not identical.
The attack on all nodes leads to the lowest resilience index, followed by an attack on the middle pendulum.
This demonstrates the transferability of the implications that would be taken from the novel resilience index -- e.g. protecting the middle pendulum is more important for a defender than defending the outer pendula.
\begin{figure}
	\center
	\includegraphics[width=0.49\textwidth]{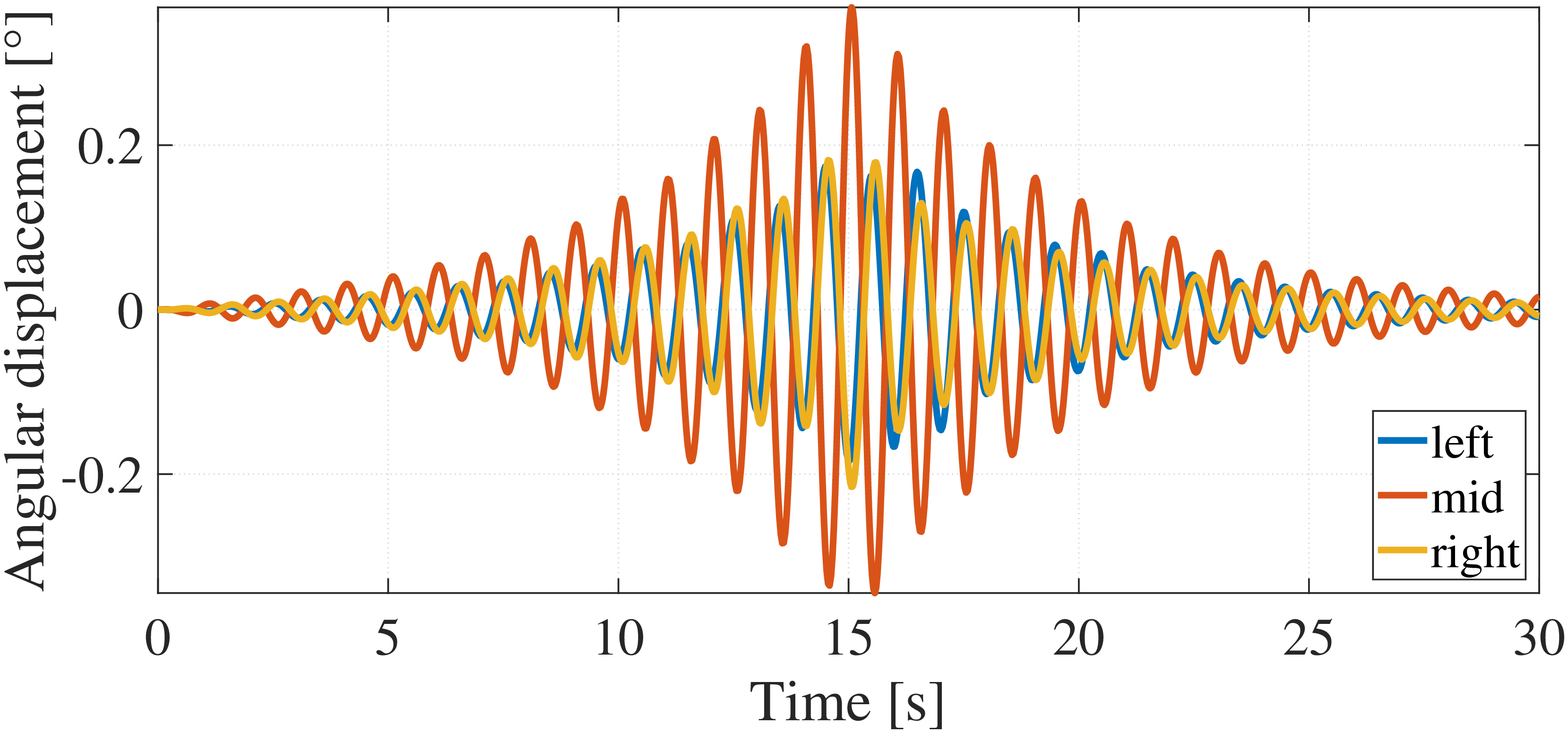}
	\includegraphics[width=0.49\textwidth]{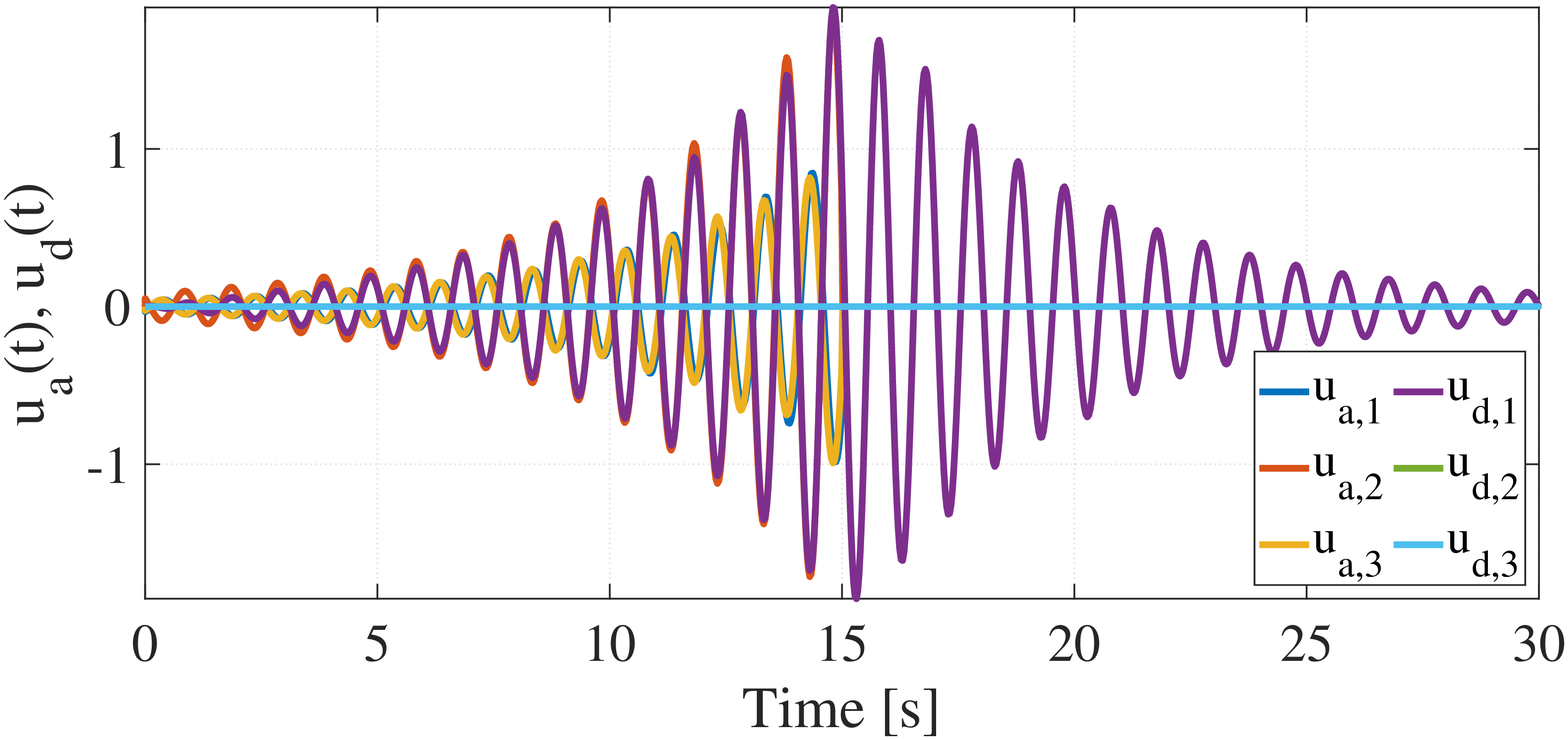}
	\caption{
		Angular displacements (top) and inputs (bottom) for the three pendula for an optimally chosen $15s$ minimum energy attack while a defensive LQ feedback controller is active throughout. 
		The attacker is assumed to have access to all pendula whereas the stabilizing LQ controller only actuates the left pendulum.
		}
	\label{fig:exp2_angles}
\end{figure}
\begin{table}
	\caption{Measured energy ratios and resilience indices for the extrapolation experiment}
	\label{tab:ex2}
	\begin{center}
		\begin{tabularx}{0.49\textwidth}{|c|*4{>{\centering\arraybackslash}X|}}\hline
			Attacker Configuration& Left&Middle& Right&All\\\hline
Measured Energy Ratio&3.53&1.08&3.95&0.73\\\hline
Resilience Index&465.68&114.89&464.67&77.07\\\hline
		\end{tabularx}
	\end{center}
\end{table}

\section{Conclusion}\label{sec:con}

A novel metric for LTI systems with attacked inputs and inputs available for a defensive controller is proposed.
It is based on the minimal ratio of the disturbing and restorative control energies.
It can be shown to be efficiently computable via a generalized eigenvalue problem involving the controllability Gramians of the attack and defensive input matrices.
The proposed approach extends the previous concept of the degree of disturbance rejection 
which also considers the control energies in the interplay between a specific type of distortion and the counteractions.
Our approach, however, does not focus on average efforts but on the worst-case, adversarial behavior of an attacker.
This is relevant e.g. when considering cyber attacks.

The metric enables system designers to compare different defensive control strategies for a given attack scenario or to analyze the largest vulnerability to plausible attackers, which again may be used to harden certain parts of the system.
Our experiments show that the resilience index not only covers the exact setup used to derive the index value, i.e., a consecutive minimal energy attack and defense, but also other plausible settings.
While the time span used to compute the index has to be considered carefully, 
the index can be used to uncover important vulnerabilities from the system which might not be obvious from the physical setup, as is demonstrated in our first experiment.

Further research will examine more real-world use cases and large scale settings that are often found in power system applications.

\appendix

\begin{lemm}
For symmetric positive definite matrices $\mxa$, $\mxb$ it is 
$$\min_{\vx} \frac{\vx^T\mxa^{-1}\vx}{\vx^T\mxb^{-1}\vx}
= \min_{\vx} \frac{\vx^T\mxb\vx}{\vx^T\mxa\vx} 
= \left(\max_{\vx} \frac{\vx^T\mxa\vx}{\vx^T\mxb\vx}\right)^{-1}.$$
Moreover if $\vx_l^*$ denotes the minimum-attaining vector of the left hand side and $\vx_{mr}^*$ of the middle and right hand side, respectively, then $\vx_l^* = \mxb \vx_{mr}^*$. 
\end{lemm}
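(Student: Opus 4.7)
The plan is to reduce all three quantities to extremal eigenvalues of a single symmetric positive definite matrix via a linear change of variables, which makes each equality routine. The rightmost equality is essentially a tautology: since $\mxa,\mxb$ are positive definite, $\vx^T\mxa\vx$ and $\vx^T\mxb\vx$ are strictly positive for $\vx\neq\vnull$, so applying $\min_\vx f(\vx)^{-1} = (\max_\vx f(\vx))^{-1}$ to $f(\vx)=(\vx^T\mxa\vx)/(\vx^T\mxb\vx)$ suffices. The substantive work is the first equality together with the eigenvector statement.

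For the first equality I would introduce $\mxm = \mxa^{-1/2}\mxb\mxa^{-1/2}$, which is symmetric positive definite. Substituting $\vy = \mxa^{1/2}\vx$ in the middle ratio yields the ordinary Rayleigh quotient $(\vy^T\mxm\vy)/(\vy^T\vy)$, whose minimum over $\vy\neq\vnull$ equals $\lambda_{\min}(\mxm)$ by Rayleigh--Ritz. Substituting $\vz=\mxa^{-1/2}\vx$ in the left ratio, and using the identity $\mxa^{1/2}\mxb^{-1}\mxa^{1/2} = \mxm^{-1}$, converts it to $(\vz^T\vz)/(\vz^T\mxm^{-1}\vz)$, whose minimum equals $1/\lambda_{\max}(\mxm^{-1})=\lambda_{\min}(\mxm)$. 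Both minima therefore coincide with $\lambda_{\min}(\mxm)$, establishing the first equality.

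For the eigenvector relation, let $\vw$ be a unit eigenvector of $\mxm$ associated with $\lambda_{\min}(\mxm)$. Tracking $\vw$ back through each substitution gives $\vx_l^* = \mxa^{1/2}\vw$ and $\vx_{mr}^* = \mxa^{-1/2}\vw$, hence the direct identity $\mxa\vx_{mr}^* = \vx_l^*$. Using in addition the generalized eigenvalue equation $\mxb\vx_{mr}^* = \lambda_{\min}(\mxm)\,\mxa\vx_{mr}^*$, which characterizes the middle-ratio minimizer, yields $\mxb\vx_{mr}^* = \lambda_{\min}(\mxm)\,\vx_l^*$, so $\vx_l^*$ and $\mxb\vx_{mr}^*$ are parallel. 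Since Rayleigh-quotient minimizers are defined only up to a nonzero scalar multiple, an appropriate rescaling gives the stated identity $\vx_l^* = \mxb\vx_{mr}^*$. The only place I expect a reader to slow down is exactly this scaling ambiguity: the claimed equality is a statement about directions (or one-dimensional eigenspaces when $\lambda_{\min}(\mxm)$ is simple), not about any specific representative vectors.
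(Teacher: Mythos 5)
Your proof is correct, and it follows the same basic strategy as the paper's: a change of variables by a matrix square root turns each generalized Rayleigh quotient into an ordinary one for a single symmetric positive definite matrix, and the two extremal problems are linked because the relevant matrices are inverses of each other. The one genuine difference is the choice of whitening matrix: you use $\mxa^{1/2}$ and work with $\mxm=\mxa^{-1/2}\mxb\mxa^{-1/2}$, whereas the paper substitutes with $\mxb^{\pm 1/2}$ and works with $\mxb^{1/2}\mxa^{-1}\mxb^{1/2}$ and its inverse $\mxb^{-1/2}\mxa\mxb^{-1/2}$. With the paper's choice the minimizers of the left and right problems are tied to the same eigenvector $\vx^*$, so $\vx_l^*=\mxb^{1/2}\vx^*=\mxb\vx_{mr}^*$ drops out immediately; with yours the substitutions give $\vx_l^*=\mxa\vx_{mr}^*$, and you need the extra step via the generalized eigenvalue equation $\mxb\vx_{mr}^*=\lambda_{\min}(\mxm)\,\mxa\vx_{mr}^*$ together with the observation that minimizers are only determined up to a nonzero scalar. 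That caveat is legitimate (the paper's identity is exact only for its particular choice of representatives), so your argument fully establishes the lemma, just with one more step than the paper's $\mxb$-based substitution.
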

\begin{proof}
It is
\eqna{\label{eq:R1}
\min_{\vx_1} \frac{\vx_1^T\mxa^{-1}\vx_1}{\vx_1^T\mxb^{-1}\vx_1}
= \min_{\vx} \frac{\vx^T\mxb^{\frac{1}{2}}\mxa^{-1}\mxb^{\frac{1}{2}}\vx}{\vx^T\vx}
}
which can be obtained by setting $\vx = \mxb^{-\frac{1}{2}}\vx_1$.
Similarly, we obtain
\eqna{\label{eq:R2}
\max_{\vx_2} \frac{\vx_2^T\mxa\vx_2}{\vx_2^T\mxb\vx_2}
= \min_{\vx} \frac{\vx^T\mxb^{-\frac{1}{2}}\mxa\mxb^{-\frac{1}{2}}\vx}{\vx^T\vx}
}
by setting $\vx = \mxb^{\frac{1}{2}}\vx_2$.
Since
$$\left(\mxb^{\frac{1}{2}}\mxa^{-1}\mxb^{\frac{1}{2}}\right)^{-1}
= \mxb^{-\frac{1}{2}}\mxa\mxb^{-\frac{1}{2}}$$
the two right hand side problems in \eqref{eq:R1} and \eqref{eq:R2} can both be solved by an eigenvalue decomposition of the positive definite matrix
$\mxb^{\frac{1}{2}}\mxa^{-1}\mxb^{\frac{1}{2}}$. 
Its minimal eigenvalue is the inverse of the maximal eigenvalue of  $\mxb^{-\frac{1}{2}}\mxa\mxb^{-\frac{1}{2}}$.
The corresponding two eigenvectors $\vx^*$ are identical.
For the original vectors $\vx_1^*$ and $\vx_2^*$ it follows that $\vx_1^* = \mxb^{\frac{1}{2}} \vx^* = \mxb \vx_2^*$.

\end{proof}

\addtolength{\textheight}{-12cm}   
\bibliographystyle{ieeetran}
\bibliography{Literature.bib}

\end{document}